\newcommand{\ud}{\mathrm{d}}
\newcommand{\ui}{\mathrm{i}}
\newcommand{\ue}{\mathrm{e}}
\providecommand{\norm}[1]{\lVert#1\rVert}
\providecommand{\abs}[1]{\lvert#1\rvert}
\newcommand{\tr}{\operatorname{tr}}
\newcommand{\pa}{\partial}
\newcommand{\la}{\langle}
\newcommand{\ra}{\rangle}
\newcommand{\R}{\mathds{R}}
\newcommand{\N}{\mathds{N}}
\newcommand{\C}{\mathds{C}}
\renewcommand{\Im}{\operatorname{Im}}
\renewcommand{\Re}{\operatorname{Re}}
\newcommand{\range}{\operatorname{range}}
\newtheorem{thm}{Theorem}[section]
\newtheorem{lem}[thm]{Lemma}
\newtheorem{prop}[thm]{Proposition}
\newtheorem{Def}[thm]{Definition}
\begin{document}


\title{Decoherence Time Scales and the H{\"o}rmander condition}

\author{Roman Schubert}

\author{Thomas Plastow}%

\affiliation{
School of Mathematics, University of Bristol, Bristol, United Kingdom 
}

\date{\today}

\begin{abstract}
We consider an open quantum system described by the GKLS equation and we are interested in the onset of decoherence. We are in particulary interested in situations where only some degrees of freedom of the system are coupled to the environment, and we want to understand if, and how fast, the noise travels through the system and eventually affects all degrees of freedom. We find that this can be understood in terms of the H{\"o}rmander condition, a condition on the commutators of the Hamiltonian vectorfields of the Lindbald operators and the internal Hamiltonian, which is a condition for hypoellipticity known from the theory of PDE's.  We show that for Gaussian quantum channels this condition leads to a delay in the onset of decoherence and can as well be used to detect decoherence free subsystems.

\end{abstract}

\maketitle

\section{\label{sec:Introduction}Introduction}

We will consider open quantum systems whose time evolution can be described by the GKLS equation \cite{Lind76,GKS76}. 
An open quantum systems is a quantum system which is interacting with an environment, and in many situations when the coupling to the environment is weak and 
memory effects in the environment can be neglected, the evolution of the density operator $\hat\rho$ is subject to the GKLS equation \cite{BrePet02,AliLen07}
\begin{equation}\label{eq:Lindblad}
\ui\hbar \pa_t \hat\rho=[\hat H,\hat\rho]+\frac{\ui}{2}\sum_{k=1}^K 2\hat L_k\hat \rho\hat L_k^{\dagger}-\hat L_k^{\dagger}\hat L_k\hat \rho -\hat \rho \hat L_k^{\dagger}\hat L_k\,\, .
\end{equation}
Here  the Hamiltonian $\hat H$ is Hermitian and describes the internal dynamics of the system, and the Lindblad operators $\hat L_k$ describe the influence of the environment on the system. Typical examples for Lindblad operators are 
(i) $\hat L_k=\sqrt{\Lambda} \, \hat q_k$, $k=1, \cdots , d$, where $d$ is number of degrees of freedom,  which are used if the environment can be modelled by random scatterers,  or (ii) $L_k=\gamma_k^{(-)} \hat a_k$ and 
$L_{k+d}= \gamma_k^{(+)} \hat a_k^{\dagger}$, $k=1, \cdots, d$, where $\hat a_k^{\dagger}, \hat a_k$ are creation and annihilation operators of the $k$'th mode respectively, which are used to model coupling to a heat bath, \cite{BrePet02,JoosEtAl03,AliLen07,Schlo19}.   

Decoherence is the suppression of interference effects due to the 
influence of the environment, \cite{JooZeh85,Zur91,JoosEtAl03,Horn09,Schlo19,Zur03}, it is an effect which typically sets in on very short time scales and it is a major obstacle for the practical implementation of    
quantum computing, as the superposition principle is the main resource in quantum information.  In this paper we will consider decoherence  for continuous variable quantum systems, see  \cite{BrauvLo05,Weed12,AdeRagLee14}, and in particular we will be interested in the situation where the environment is only coupled to some degrees of freedom and the internal dynamics is needed to transport the effect of the noise through the whole system. We will give a general condition 
which characterises the situations where decoherence eventually affects the whole system. Furthermore the methods we present can as well be used identify subsystems on which the onset of decoherence is delayed and decoherence free subsystems, related to \cite{LidWha03,lid14,Yam14}. 

 The Hilbert space of the systems we study  is given by $L^2(\R^n)$ and we will assume that the operators $\hat \rho, \hat H$ and $\hat L_k$ are given  as 
 Weyl-quantisations (see \cite{DimSjo99,Zwo12})
of phase space functions $\rho(x), H(x)$ and $L_k(x)$, where $x=(p,q)\in \R^{n}_p\oplus\R^n_q$ denote momentum and position  in phase space. We will focus on a class of systems for which the time evolution can be solved explicitly, namely we will assume that 
\begin{equation}
    H(x)=\frac{1}{2} x\cdot Q x\quad\text{and} \quad L_k(x)=x\cdot\Omega^T l_k
\end{equation}
Here  $Q$ is symmetric and real valued, $\Omega=\begin{pmatrix} 0 & -I\\ I & 0\end{pmatrix}$ and $l_k\in\C^{2n}$, the peculiar parametrization of $L_k$ with $\Omega$ is chosen so that the Hamiltonian vectorfield of $L_k$, $X_{L_K}=\Omega\nabla_xL_k$, is given by $l_k$. As the examples after equation \eqref{eq:Lindblad} show, this includes typical examples of Lindblad operators used in applications. 
 In this case the Lindblad equation can be rewritten as an equation for the Wignerfunction of $\hat{\rho}(t)$, $\rho(t,x)$, 
\begin{equation}\label{eq:Lindblad-ps}
\pa_t\rho=X_0 \rho+\nabla X_0 \, \rho +\frac{\hbar}{2}\sum_{k=1}^{2K} X_k^2\rho\,\, ,
\end{equation}
where the vector fields $X_k$, $k=0,1, \cdots , 2K$  are given by 
\begin{equation}\label{eq:def_X0}
\begin{split}
    X_0=&-(Fx)\cdot \nabla_x\\
&+\sum_k\Im L_k(x) \Re l_k\cdot\nabla_x-\Re L_k(x)\Im l_k\cdot\nabla_x\,\, ,
\end{split}
\end{equation}
with $F=\Omega Q$,  and for $k=1, 2, \cdots , K$ 
\begin{equation}\label{eq:def_Xk}
X_k =\Re l_k\cdot\nabla_x\,\, ,\quad X_{K+k}=\Im l_k\cdot\nabla_x\,\, .
\end{equation}

The equation \eqref{eq:Lindblad-ps} has a transport part, $X_0$, and a diffusive part given by the sum of squares of $X_k$, $k=1, 2, \cdots , 2K$. Such equations occur in the 
description of stochastic processes, and this connection is not surprising as we are in a situation where we treat the influence of the environment as noise. 
In the theory of stochastic processes it is important to understand under which conditions on the vector fields $X_0,X_k$  the solutions to \eqref{eq:Lindblad-ps} 
are smooth for $t>0$, even for singular initial conditions, this is a condition which is called hypoellipticity in the theory of Partial Differential Equations, \cite{Bram14}. 
There is a celebrated result by H{\"o}rmander \cite{Hor67} which gives a criterion for hypoellipticity, it is formulated in terms of the commutators of the vector fields 
$X_0,X_k$, $k=1,2,\cdots $. 

\begin{Def}\label{def:Hormander}
Let us consider the  subspaces $V_k\subset \R^n_p\oplus\R^n_q$, $k=0,1,2, \cdots $,  spanned by the $X_j$, $j=1,2,\cdots $ and iterated commutators with $X_0$, 
\begin{align}
V_0:=&\la X_j\, ;\,\, j=1, \cdots ,2K\ra \label{eq:def-V0} \\
V_k:=&\la Y, [Y,X_0] \, ; \,\, Y\in V_{k-1}\ra\,\, .\label{eq:def-Vk}
\end{align}
We say that  $X_j$, $j=0,1, \cdots , K$, satisfy the {\bf{H{\"o}rmander condition}} if for some $k$ we have $V_k=\R^n_p\oplus\R^n_q$.
\end{Def}

See \cite{Bram14,AgrBarBos20} for more information on the geometry behind this condition. The H{\"o}rmander condition is in control theory and sub-Riemannian geometry sometimes as well called the bracket condition or the Chow condition. 
To explain its meaning let us denote by $\phi_k^t$ the flow generated by $X_k$, then we have, \cite{AgrBarBos20},
 \begin{equation}
 \phi_k^{-t}\circ \phi_{k'}^{-t}\circ \phi_k^{t}\circ \phi_{k'}^{t}=t^2 [X_k,X_{k'}]+O(t^3)\,\, ,
 \end{equation}
 so by combining the flows of the vector fields $X_k$, $k=0,1,\ldots, $, we can move in a direction given by a commutator $[X_k,X_{k'}]$. 
 By iterating this argument one can show that by suitably composition of the flows one can move in the direction of iterated commutators, too. This idea is formalised in the Chow Rashevsky Theorem, see \cite{Bram14,AgrBarBos20}.

 In the phase space representation of the Lindblad equation, \eqref{eq:Lindblad-ps}, we have transport in the direction of the vector field $X_0$ and diffusion in the 
 direction of the vector fields $X_k$, $k=1, 2,\ldots , 2K$, and so in view of the Chow Rashevsky Theorem it is natural to expect that the diffusion will affect all 
 parts of the system if H{\"o}rmander's condition holds. This is particular interesting in situations where the environment couples only to some degrees of freedom of the system, 
 and we would like to understand under which conditions decoherence will affect all parts of the system, or only some parts. H{\"o}rmander's condition will give a sufficient condition for 
 decoherence to spread through the whole system.

In order to illustrate the condition let us take as example a free particle in one-degree of freedom with collisional decoherence, i.e., $H=\frac{1}{2m}p^2$ and $L=\sqrt{\Lambda} q$. Then we have 
\begin{equation}\label{eq:coll-dec}
  X_0=\frac{1}{m}p\pa_q\,\, , X_1=\sqrt{\Lambda}\pa_p \quad\text{and}\quad [X_1,X_0]=\frac{\sqrt{\Lambda}}{m}\pa_q
\end{equation}
and so $V_0=\R_p\subset \R_p\oplus\R_q$ and $V_1=\R_p\oplus\R_q$ and the H{\"o}rmander condition holds.  On the other hand side, if we choose $H=\lambda pq$, a Hamiltonian used for instance as a normal form near an unstable fixed point, and the same Lindblad term $L=\sqrt{\Lambda}q$, then we have 
\begin{equation}
  X_0=\lambda(p\pa_p+q\pa_q)\,\, , X_1=\sqrt{\Lambda}\pa_p \quad\text{and}\quad [X_1,X_0]=\lambda X_1
\end{equation}
so $V_0=\R_p$ and $V_1=V_0$, and the H{\"o}rmander condition does not hold. 
 
In the situation we consider, namely that the internal Hamiltonain is quadratic in $\hat x$ and the Lindblad operators are linear in $\hat x$, the time evolution of the system generated by the GKLS equation is a one parameter semigroup of Gaussian Channels, $\mathcal{V}_t$, \cite{Weed12}.  A Gaussian Channel can be characterised by its action on the characteristic function of a state. Recall that if $\hat\rho$ is a density operator, then its characteristic function is defined as 
 \begin{equation}
 \chi_{\hat\rho}(\xi):=\tr[\hat \rho T(\xi)]\,\, ,
 \end{equation} 
 where $T(\xi)=\ue^{-\frac{\ui}{\hbar} \xi\cdot \hat x}$. The characteristic function is the Fourier transform of the Wigner function, hence  the  Wigner function can be computed from the characteristic function by the inverse Fourier transform. Now the action of a Gaussian Channel $\mathcal{V}$ on the characteristic function is given by 
 \begin{equation}\label{eq:GC}
     \chi_{\mathcal{V}\hat\rho}(\xi)=\chi_{\hat\rho}(R^T\xi)\ue^{-\frac{1}{2\hbar}D(\xi)}
 \end{equation}
where $R:\R^{2n}\to\R^{2n}$ is a linear map and $D(\xi)=\xi^TD\xi$ is a quadratic form in $\xi$, \cite{HolWer01,Weed12}. The map $R$ and the quadratic form $D$ define the Gaussian Channel uniquely, and they have to satisfy the condition 
\begin{equation}
D+i(\Omega-R\Omega R^T)\geq 0
\end{equation}
in order to guarantee complete positivity, \cite{HolWer01}. Therefore the solution to the GKLS equation \eqref{eq:Lindblad-ps} is given in terms of the corresponding maps $R_t$ and quadratic forms $D_t$, and  our strategy to understand the onset of decoherence is to compute the short time behavior of $D_t$, since  decoherence holds if $D_t>0$.

To demonstrate the presence of decoherence we will look at the evolution of off-diagonal terms in cat states. A cat state is  a superposition of two coherent states $|\psi\ra=\frac{1}{\sqrt{2}}(|z_1\ra+|z_2\ra)$, centred at two phase space points $z_1=(p_1,q_1)$ and $z_2=(p_2,q_2)$, respectively. Its density matrix has 4 terms
 \begin{equation}
\hat\rho=\frac{1}{2}\big(|z_1\ra\la z_1|+|z_2\ra\la z_2|+|z_1\ra\la z_2|+|z_2\ra\la z_1|\big)
\end{equation}
and one manifestation of decoherence is the suppression of the off-diagonal terms $|z_i\ra\la z_j|$, $z_i\neq z_j$, see \cite{Zur91,PabHabZur93,Zur03}.  The corresponding characteristic functions are given by 
\begin{equation}\label{eq:char-z-z}
    \chi_{|z_i\ra\la z_j|}(\xi)=\ue^{-\frac{\ui}{\hbar}\bar z_{ij}\cdot \xi}\ue^{-\frac{1}{4\hbar}(\xi-\Omega\delta z_{ij})^2}
\end{equation}
where $\delta z_{ij}=z_i-z_j$ and  $\bar z_{ij}=(z_i+z_j)/2$. If we now apply a Gaussian channel to this term we obtain 
\begin{equation}
\ue^{-\frac{\ui}{\hbar}\bar z_{ij}\cdot R^T\xi}\ue^{-\frac{1}{4\hbar}[(R^T\xi-\Omega\delta z_{ij})^2+2\xi \cdot D\xi]}    
\end{equation}
so if $\delta z_{ij}\neq 0$ and $D>0$ then this term will be small, and this is how decoherence manifests itself on the characteristic function. 
We will use the Hilbert Schmidt norm, 
\begin{equation}
 \norm{\hat\rho}_{HS}:=(\tr[\hat \rho^{\dagger}\hat\rho])^{1/2}\,\, ,   
\end{equation}
as a measure of the size of these terms, this norm is particular convenient since we have 
\begin{equation}\label{eq:HS-chi}
 \norm{\hat \rho}_{HS}^2=\frac{1}{(2\pi\hbar)^n}\int \abs{\chi(\xi)}^2\ud \xi\,\, ,  
\end{equation}
and $\norm{|z_1\ra\la z_2|}_{HS}=1$. 

We can now formulate one of our main results.

 \begin{thm}\label{thm:1} Suppose   that $\hat \rho_0=|z_1\ra \la z_2|$ and set $\delta z=z_1-z_2$, then we have for short times the estimate
 \begin{equation}
 \norm{\hat \rho_t}_{HS} = \ue^{-\frac{1}{2\hbar} [d_0(\delta z) t +O(t^{2})] } (1+O(t)) 
 \end{equation}
 and if $\Omega \delta z\in V_{j-1}^{\perp}$ then the more precise estimate 
 \begin{equation}
 \norm{\hat \rho_t}_{HS} = \ue^{-\frac{1}{2\hbar} [d_j( \delta z) t^{2j+1} +O(t^{2j+2})] } (1+O(t)) 
 \end{equation}
 holds, where for $j=0,1,2, \ldots $
 \begin{equation}
 d_j(\delta z)=\frac{1}{(2j+1) (j!)^2}\sum_{k=1}^{K} \abs{L_k(F^j  \delta z)}^2 \,\, ,
 \end{equation}
 and $F=\Omega Q$ is the Hamiltonian map of $H$. 
 \end{thm}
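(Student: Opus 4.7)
The plan is to evaluate $\norm{\hat\rho_t}_{HS}$ exactly via the characteristic-function formula \eqref{eq:HS-chi}, Taylor-expand the resulting exponent in $t$, and read off the scaling from the H\"ormander tower.

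Inserting \eqref{eq:char-z-z} into \eqref{eq:GC}, the linear phase drops out in $\abs{\chi_t(\xi)}^2$, and the remaining Gaussian integral in \eqref{eq:HS-chi} yields
\[
\norm{\hat\rho_t}_{HS}^2=(\det M_t)^{-1/2}\exp\bigl[-\tfrac{1}{2\hbar}E_t\bigr],\qquad E_t=\abs{\delta z}^2-(\Omega\delta z)^T R_t^T M_t^{-1}R_t(\Omega\delta z),
\]
with $M_t:=R_tR_t^T+2D_t$. Since $M_0=I$, the determinantal prefactor contributes only the $(1+O(t))$ factor, so the scaling must come from $E_t$. Using the identity $R^T(RR^T+2D)^{-1}R=(I+2\tilde D)^{-1}$ with $\tilde D_t:=R_t^{-1}D_tR_t^{-T}$ and $\abs{\Omega\delta z}^2=\abs{\delta z}^2$, one obtains $E_t=2(\Omega\delta z)^T\tilde D_t(\Omega\delta z)+O(\norm{\tilde D_t}^2)$.

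The GKLS semigroup turns into a linear ODE for $R_t$ and a Lyapunov-type equation for $D_t$, giving the integral representation $\tilde D_t=\int_0^t \ue^{-vA}C\,\ue^{-vA^T}\,\ud v$, where $A$ is the effective drift matrix of $X_0$ and $C=\sum_{k=1}^K[(\Re l_k)(\Re l_k)^T+(\Im l_k)(\Im l_k)^T]$; hence
\[
(\Omega\delta z)^T\tilde D_t(\Omega\delta z)=\int_0^t\sum_{k=1}^K\Bigl[\bigl((\ue^{-vA}\Re l_k)\cdot\Omega\delta z\bigr)^2+\bigl((\ue^{-vA}\Im l_k)\cdot\Omega\delta z\bigr)^2\Bigr]\ud v.
\]
Because $[a\cdot\nabla_x,X_0]=(Aa)\cdot\nabla_x$ for constant $a$, the tower of Def.~\ref{def:Hormander} satisfies $V_{j-1}=\spann\{A^i\Re l_k,A^i\Im l_k:0\le i\le j-1,\,k\}$, so $\Omega\delta z\in V_{j-1}^\perp$ forces $(A^i\Re l_k)\cdot\Omega\delta z=(A^i\Im l_k)\cdot\Omega\delta z=0$ for every $i<j$. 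Taylor-expanding $\ue^{-vA}$ in the integrand, each bracket starts at order $v^j$; a short induction then shows $(A^j-F^j)\Re l_k\in V_{j-1}$ (and likewise for $\Im l_k$), so the leading coefficient is $v^{2j}/(j!)^2\cdot\sum_k\abs{L_k(F^j\delta z)}^2$ after invoking the identity $L_k(F^j\delta z)=(-1)^j\la F^j l_k,\Omega\delta z\ra$, which in turn follows from $L_k(x)=\la l_k,\Omega x\ra$ and $\Omega F=-F^T\Omega$. Integrating in $v$ produces $(\Omega\delta z)^T\tilde D_t(\Omega\delta z)=d_j(\delta z)t^{2j+1}+O(t^{2j+2})$, while the bound $\tilde D_t\,\Omega\delta z=O(t^{j+1})$ ensures the quadratic correction in $\tilde D_t$ contributes only at order $t^{2j+2}$. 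Hence $E_t=2d_j(\delta z)\,t^{2j+1}+O(t^{2j+2})$, and the theorem follows by taking the square root; the first, unconditional estimate is the $j=0$ case.

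The main technical obstacle is the induction $(A^j-F^j)\Re l_k\in V_{j-1}$, which reconciles the full drift $A$ (containing the linear-in-$x$ corrections in \eqref{eq:def_X0} produced by the Lindblad operators themselves) with the pure Hamiltonian map $F$ that appears in the stated coefficient $d_j(\delta z)$. The remainder is routine Taylor expansion and bookkeeping of higher-order terms.
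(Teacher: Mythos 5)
Your proof is correct and takes essentially the same route as the paper: an exact Gaussian evaluation of the Hilbert--Schmidt norm (Lemma \ref{lem:coh_state-decoherence}, whose exponent $\tfrac{2}{2\hbar}(\Omega\delta z)\cdot C_t(I+2C_t)^{-1}\Omega\delta z$ is exactly your $E_t$ with $\tilde D_t=C_t$), the reduction of $A^j$ to $F^j$ against $\xi\in V_{j-1}^{\perp}$ via $\range N\subset V_0$ (Lemma \ref{lem:aux-zero}), and the bound $C_t\,\Omega\delta z=O(t^{j+1})$ to control the $(I+2C_t)^{-1}$ correction (Proposition \ref{prop}). The differences are cosmetic only: you Taylor-expand the integral representation of $C_t$ directly rather than computing derivatives through the Lyapunov equation \eqref{eq:Dt}, and your prefactor $(\det(R_tR_t^T+2D_t))^{-1/2}$ places $\abs{\det R_t}$ in the denominator rather than the numerator of \eqref{eq:HS-est}, which is immaterial here since either way it is $1+O(t)$.
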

 
 We will give a slightly more precise formulation in Theorem \ref{thm:decoherence} in Section \ref{sec:Horm}. The proof is based on \eqref{eq:HS-chi} and \eqref{eq:GC}, and  estimates on $D_t$ we will develop in section \ref{sec:Horm}. One can derive similar upper bounds for the Wignerfunction $\rho_t(x)$ based on $\abs{\rho_t(x)}\leq \frac{1}{(2\pi\hbar)^n}\int \abs{\chi_t(\xi)}\, d\xi $. 
 
 For the example \eqref{eq:coll-dec} we find for short times 
     \begin{equation}
         \norm{\hat \rho_t}_{HS} \sim\begin{cases} \ue^{-\frac{1}{2\hbar}\Lambda \abs{\delta q}^2 t} & \delta q\neq 0\\
         \ue^{-\frac{1}{2\hbar}\frac{1}{3m}\Lambda \abs{\delta p}^2 t^3} & \delta q=0 
         \end{cases}
     \end{equation}
     where $\delta z=(\delta p,\delta q)\neq 0$. We see that the onset of decoherence is delayed if $\delta q=0$, this can be explained by observing that the effects of noise on different regions in $q$ are independent of each other, so if $\delta q\neq 0$ the relative phase between the two coherent states will become random. But if $\delta q=0$ and $\delta p\neq 0$, then the internal dynamics is needed to separate  the two coherent states before the noise can randomise the relative phase. This effect was already observed and explained in \cite{PabHabZur93,Zur03}, and a quantitative estimate appeared in \cite{JoosEtAl03}.

 The plan of the paper is as follows. In Section \ref{sec:background} we recall the explicit formulas for the time evolution and in Section \ref{sec:Horm} we use the H{\"o}rmander condition to derive explicit expression for the onset of decoherence at short times. We want to emphasise that the material in Sections \ref{sec:background} and \ref{sec:Horm} are based on similar results in \cite{Kup72,LancPol94} for evolution equations of Fokker-Planck type. Our main contribution is to apply these ideas to the study of decoherence.  In Section \ref{sec:examples} we apply our results to some examples to help illuminate their meaning and in Section \ref{sec:summary} we summarise our results and indicate some future directions of study.

\section{Background}\label{sec:background}

In this section we will recall the explicit form for the time evolution generated by the Lindblad equation in the situation we consider. 
With the notation from \eqref{eq:def_X0} and \eqref{eq:def_Xk} we have 
\begin{equation}\label{eq:X0_lk}
X_0 \rho =-(Ax)\cdot \nabla \rho \quad \text{and}\quad \sum_{k=1}^{2K}  X_k^2 \rho=\nabla \cdot M\nabla \rho \,\, ,
\end{equation}
where 
\begin{equation}
A=F+N\Omega
\end{equation}
 with
\begin{align}
N&=\sum_k\Re l_k\Im l_k^T-\Im l_k\Re l_k^T\,\, ,\label{eq:def-N}\\
M&=\sum_k \Re l_k\Re l_k^T+\Im l_k \Im l_k^T\,\, .
\end{align}
Notice that 
\begin{equation}
\sum_{k} \bar l_k l_k^T=M+\ui N
\end{equation}
and  $M$ is symmetric, and $N$ is anti-symmetric, respectively.

We can now write down an explicit solution to the Lindblad equation, expressed in terms of the characteristic function of $\hat{\rho}(t)$. 

\begin{thm}\label{thm:time-evo}
Set 
\begin{equation}\label{eq:time-evo-R-D}
R_t:=\ue^{tA}\quad\text{and}\quad D_t=\int_0^t R_s MR_s^T\, \ud s\,\, ,
\end{equation}
and suppose $\rho(t,x)$ is a solution to \eqref{eq:Lindblad-ps} with $\rho(0,x)=\rho_0(x)$,  then for $t\geq 0$ 
\begin{equation}\label{eq:time-evo-F}
\chi(t,\xi)=\chi_0(R_t^T\xi)\, \ue^{-\frac{1}{2\hbar} \xi\cdot D_t\xi}\,\, .
\end{equation}
\end{thm}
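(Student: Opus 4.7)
The plan is to Fourier-transform the Wigner-function equation \eqref{eq:Lindblad-ps} into a first-order PDE for the characteristic function $\chi(t,\xi)$, and then verify the ansatz \eqref{eq:time-evo-F} by direct substitution.

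The first step is to transform each term on the right of \eqref{eq:Lindblad-ps} separately. Using \eqref{eq:X0_lk}, the drift is $X_0\rho=-(Ax)\cdot\nabla\rho$ and the diffusion is $\tfrac{\hbar}{2}\nabla\cdot M\nabla\rho$; integration by parts sends the drift to $(A^T\xi)\cdot\nabla_\xi\chi+\tr(A)\chi$, and the $\tr(A)$ piece is precisely cancelled by the explicit $\nabla X_0\,\rho=-\tr(A)\rho$ term, while the diffusion becomes the multiplier $-\tfrac{1}{2\hbar}(\xi\cdot M\xi)\chi$. The net equation is
\begin{equation*}
\pa_t\chi=(A^T\xi)\cdot\nabla_\xi\chi-\tfrac{1}{2\hbar}(\xi\cdot M\xi)\,\chi,\qquad \chi(0,\xi)=\chi_0(\xi).
\end{equation*}

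The second step is to substitute $\chi(t,\xi)=\chi_0(R_t^T\xi)\,\ue^{-\frac{1}{2\hbar}\xi\cdot D_t\xi}$ and check it solves this PDE. Because $[R_t,A]=0$, the contributions coming from differentiating $\chi_0(R_t^T\xi)$ on the two sides agree, and matching the remaining quadratic parts reduces to the Lyapunov identity $\dot D_t=AD_t+D_tA^T+M$ with $D_0=0$. From the definition in \eqref{eq:time-evo-R-D}, direct differentiation gives $\dot D_t=R_tMR_t^T$, while $\dot R_s=AR_s$ yields $\tfrac{\ud}{\ud s}(R_sMR_s^T)=AR_sMR_s^T+R_sMR_s^TA^T$, so integrating from $0$ to $t$ gives $AD_t+D_tA^T=R_tMR_t^T-M$; the two expressions coincide and the identity is verified. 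Uniqueness for the linear PDE then completes the proof. As an equivalent alternative, the method of characteristics with $\dot\xi=-A^T\xi$ produces the same answer: $\int_0^t\xi(s)\cdot M\xi(s)\,\ud s$ collapses to $\xi\cdot D_t\xi$ after the change of variables $u=t-s$, making the appearance of $D_t$ as an integral along characteristics geometrically transparent.

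The only delicate point is the drift-term Fourier transform, where the $\tr(A)$ produced by pulling $\nabla_\xi$ through the linear symbol $A^T\xi$ must be carefully tracked and seen to cancel the explicit $\nabla X_0$ term in \eqref{eq:Lindblad-ps}; once this bookkeeping is done, everything reduces to routine linear algebra of the semigroup $\ue^{tA}$.
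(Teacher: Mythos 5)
Your proposal is correct and follows essentially the same route as the paper's proof: Fourier transforming \eqref{eq:Lindblad-ps} to the transport--multiplier equation for $\chi$, substituting the ansatz, and reducing to $\pa_t R_t=AR_t$ together with the Lyapunov identity $\pa_t D_t=AD_t+D_tA^T+M$, verified for $D_t=\int_0^t R_sMR_s^T\,\ud s$ and concluded by uniqueness. Your explicit tracking of the $\tr(A)$ cancellation and the characteristics remark are fine additions but do not change the argument.
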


We can rewrite this result for the Wigner function if we assume $D_t>0$ for $t>0$, then we find by inserting the Fouriertransform that 
\begin{equation}
    \rho(t,x)=\int K(t,x,y)\, \rho_0(y)\, dy\, ,
\end{equation}
where the propagator is given by 
\begin{equation}
    K(t,x,y)=\frac{1}{(2\pi\hbar)^n\sqrt{\det D_t}} \, e^{-\frac{1}{2\hbar}(x-R_ty)D_t^{-1}(x-R_t y)}\,\, .
\end{equation}
This is a classical result for Fokker-Planck type equations. A first special case with a degenerate $M$ goes back to Kolmogorov \cite{Kolm34}, the first time the general case appears seems to be \cite{Kup72}, see as well 
\cite{LancPol94} for a more recent study. In the physics literature see \cite{Ris89} in the context of Fokker-Planck equations, and \cite{BroOzo10} specifically for the Lindblad equation.

\begin{proof} 
In the case of quadratic $H$ and linear $L_k$ the Lindblad equation \eqref{eq:Lindblad-ps} reduces to 
\begin{equation}
\pa_t\rho(t,x)=-(Ax)\cdot \nabla \rho(t,x)-\tr A\, \rho(t,x)+\frac{\hbar}{2}\nabla \cdot M\nabla \rho(t,x)\,\, , 
\end{equation}
which gives for the characteristic function  $\chi(t,\xi)=\int \ue^{-\frac{\ui}{\hbar} x\cdot \xi} \rho(t,x)\, \ud x$ the equation 
\begin{equation}\label{eq:FLindblad}
\pa_t \chi(t,\xi)=(A^T\xi)\cdot \nabla_{\xi}\chi(t,\xi)-\frac{1}{2\hbar} \xi \cdot M\xi \chi(t,\xi)\,\, .
\end{equation}
If we make an Ansatz $\chi(t,\xi)=\chi_0(R_t^T\xi)\ue^{-\frac{1}{2\hbar}\xi \cdot D_t\xi}$, with $R_t$ and $D_t$ $2n\times 2n$ matrices, $D_t$ symmetric, with $R_0=I$ and $D_0=0$, then the left hand side of 
\eqref{eq:FLindblad} is 
\begin{equation}
(\pa_t R_t^T\xi)\cdot (\nabla \chi_0)(R_t^T\xi) \ue^{-\frac{1}{2\hbar}\xi \cdot D_t\xi} -\frac{1}{2\hbar} \xi \cdot \pa_t D_t\xi \chi(t,\xi)
\end{equation}
whereas the right hand side gives
\begin{equation}
\begin{split}
(A^T\xi)&\cdot (R_t\nabla \chi_0)(R_t^T\xi)\ue^{-\frac{1}{2\hbar}\xi \cdot D_t\xi}\\
&-\frac{1}{\hbar} (A^T\xi)\cdot D_t\xi \chi(t,\xi)-\frac{1}{2\hbar} \xi\cdot M\xi \chi(t,\xi)\,\, .
\end{split}
\end{equation}
With $(A^T\xi)\cdot D_t\xi=\xi \cdot AD_t\xi=\frac{1}{2}\xi\cdot(AD_t+D_tA^T)\xi$ we obtain the two relations
\begin{align}
\pa_t R_t&=AR_t\\
\pa_t D_t&=AD_t+D_tA^T+M\,\, , \label{eq:Dt}
\end{align}
the first one is solved by $R_t=\ue^{t A}$ and the second by $D_t=\int_0^t R_s MR_s^T\, \ud s$. To see this we observe that $D_t=\int_0^t R_s MR_s^T\, \ud s$ satisfies 
\begin{equation}
\begin{split}
AD_t+D_tA^T+M&=\int_0^t \frac{\ud }{\ud s} \big(R(_s MR_s^T\big)\, \ud s+M\\
&=R_tM_tR^T=\pa_t D_t\,\, .
\end{split}
\end{equation}
This finishes the proof of \eqref{eq:time-evo-F} since the solution to the initial value problem is unique. 
\end{proof}

In the following we will abuse notation and denote the matrices $D$ and $M$ and the corresponding quadratic functions $D(\xi)=\xi^TD\xi$ and $M(\xi)=\xi^T M\xi$ be the same letter. Using this notation we can write 
\begin{equation}
  D_t(\xi)=\int_0^t M(R_s^T\xi)\, ds=\int_0^t\sum_{k=1}^K \abs{R_sl_k\cdot \xi}^2\, ds  
\end{equation}
where we have used as well that $M(\xi)=\sum_k\abs{l_k\cdot \xi}^2$.

If we insert \eqref{eq:time-evo-F} into \eqref{eq:HS-chi}  we get a Gaussian integral which we can compute and obtain the following result. 

\begin{lem}\label{lem:coh_state-decoherence}
Suppose  $\hat\rho_0=| z_1 \rangle\langle z_2|$, then the time evolved  $\rho_t$ satisfies 
\begin{equation}\label{eq:HS-est}
\norm{\hat \rho_t}_{HS}^2 =\frac{\abs{\det R_{t}}}{\sqrt{\det (I+2C_t)}} \, \ue^{-\frac{1}{\hbar} (\Omega \delta z)\tilde C_t \Omega \delta z} \,\, , 
\end{equation}
where $\delta z=z_1-z_2$ and 
\begin{equation}
\tilde C_t=C_t(I+2C_t)^{-1}\,\, , 
\end{equation}
with $C_t=R_{-t}D_tR_{-t}^T$. 
\end{lem}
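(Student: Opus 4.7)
The strategy is to plug the time-evolved characteristic function into the Hilbert--Schmidt formula \eqref{eq:HS-chi} and evaluate the resulting Gaussian integral. Inserting \eqref{eq:char-z-z} for $\chi_{|z_1\ra\la z_2|}$ into \eqref{eq:time-evo-F} gives
\[ \chi_t(\xi) = \ue^{-\frac{\ui}{\hbar}\bar z\cdot R_t^T\xi}\, \ue^{-\frac{1}{4\hbar}\abs{R_t^T\xi - \Omega\delta z}^2}\, \ue^{-\frac{1}{2\hbar}\xi\cdot D_t\xi}, \]
and taking $\abs{\chi_t(\xi)}^2$ removes the oscillating phase. Expanding the square, the integrand in \eqref{eq:HS-chi} becomes a (shifted) Gaussian with exponent
\[ -\frac{1}{2\hbar}\bigl[\xi\cdot(R_tR_t^T + 2D_t)\xi - 2\xi\cdot R_t\Omega\delta z + \abs{\Omega\delta z}^2\bigr]. \]

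The key algebraic input is that the definition $C_t = R_{-t}D_tR_{-t}^T$ is equivalent to $D_t = R_tC_tR_t^T$, whence
\[ R_tR_t^T + 2D_t = R_t(I+2C_t)R_t^T. \]
Completing the square in $\xi$ and performing the Gaussian integration over $\R^{2n}$ produces a determinant factor coming from $\det[R_t(I+2C_t)R_t^T] = (\det R_t)^2\det(I+2C_t)$, together with a residual constant in the exponent equal to
\[ \frac{1}{2\hbar}\bigl[(R_t\Omega\delta z)\cdot \bigl(R_t(I+2C_t)R_t^T\bigr)^{-1}(R_t\Omega\delta z) - \abs{\Omega\delta z}^2\bigr]. \]
The $R_t^{\pm 1}$ factors cancel inside the quadratic form on the left, reducing this expression to $(\Omega\delta z)\cdot\bigl[(I+2C_t)^{-1}-I\bigr](\Omega\delta z)/(2\hbar)$.

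The final step is the matrix identity $(I+2C_t)^{-1}-I = -2C_t(I+2C_t)^{-1} = -2\tilde C_t$, which turns the residual exponent into $-(\Omega\delta z)\cdot\tilde C_t(\Omega\delta z)/\hbar$. Collecting the determinant prefactor together with the $1/(2\pi\hbar)^n$ from \eqref{eq:HS-chi} then yields the claimed formula \eqref{eq:HS-est}. There is no real analytical obstacle: the whole computation is a standard Gaussian integral and the main task is the careful bookkeeping of how the $R_t, R_t^T$ factors cancel via the two identities $D_t = R_tC_tR_t^T$ and $(I+2C_t)^{-1}-I = -2\tilde C_t$. Positivity of $M$ makes $D_t$ and hence $C_t$ positive semidefinite, so $I+2C_t$ is positive definite and the Gaussian integral converges without further hypothesis.
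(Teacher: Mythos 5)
Your strategy is exactly the one the paper indicates (insert \eqref{eq:time-evo-F} into \eqref{eq:HS-chi} and evaluate the Gaussian integral), and your treatment of the exponent is correct: completing the square, the cancellation of the $R_t$ factors inside the quadratic form, and the identity $(I+2C_t)^{-1}-I=-2C_t(I+2C_t)^{-1}=-2\tilde C_t$ do give the exponent $-\frac{1}{\hbar}(\Omega\delta z)\cdot\tilde C_t\,\Omega\delta z$. The problem is your last step, where you claim the determinant bookkeeping ``yields the claimed formula'': the Gaussian integration over $\R^{2n}$ produces $\bigl(\det[R_t(I+2C_t)R_t^T]\bigr)^{-1/2}=\bigl[(\det R_t)^2\det(I+2C_t)\bigr]^{-1/2}$, so after cancelling $(2\pi\hbar)^n$ the prefactor your computation actually delivers is $\abs{\det R_t}^{-1}/\sqrt{\det(I+2C_t)}=\abs{\det R_{-t}}/\sqrt{\det(I+2C_t)}$, which is the \emph{reciprocal} of the $\abs{\det R_t}$ printed in \eqref{eq:HS-est}; the two coincide only when $\abs{\det R_t}=\ue^{t\tr A}=1$.

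This is not a defect of your Gaussian integral but a discrepancy you have silently absorbed, and in fact your computed prefactor is the correct one. Check it on the damped oscillator at zero temperature ($L_1=\sqrt{\gamma}\,a$, $\bar n=0$): there $R_t=\ue^{-\gamma t/2}\times(\text{rotation})$ and $C_t=\tfrac12(\ue^{\gamma t}-1)I$, so for $\delta z=0$ your prefactor gives $\norm{\hat\rho_t}_{HS}=1$, consistent with the fact that a coherent state stays pure under zero-temperature damping, whereas the prefactor as printed in \eqref{eq:HS-est} would give purity $\ue^{-2\gamma t}$. So you should either flag explicitly that the numerator in \eqref{eq:HS-est} ought to read $\abs{\det R_{-t}}$ (equivalently $\abs{\det R_t}^{-1}$), or not assert agreement: as written, your closing sentence claims a match that your own (correct) intermediate steps contradict. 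The discrepancy is harmless for the short-time statement of Theorem~\ref{thm:1}, since $\abs{\det R_t}^{\pm 1}=1+O(t)$, but for a lemma stated at all $t$ the distinction matters whenever $\tr A\neq 0$.
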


\section{H{\"o}rmander Condition}\label{sec:Horm}

By  Lemma \ref{lem:coh_state-decoherence}  we know that a system displays decoherence in phase space if, and only if, the matrix $D_t$ is strictly positive for $t>0$. In this section we will show how this property is related to 
the H{\"o}rmander condition.

Let us start by evaluating the H{\"o}rmander condition for the set of vectorfields 
\begin{equation}
X_0,X_1, \cdots X_{2K}
\end{equation}
defined in \eqref{eq:def_X0} and \eqref{eq:def_Xk} in the form  \eqref{eq:X0_lk}.

\begin{lem}\label{lem:Hoerm_quadr}
Let $V_k$ be the subspaces defined by \eqref{eq:def-Vk}, then we have
\begin{equation}
V_k=V_0+FV_0+F^2V_0+\cdots +F^k V_0\subset \R^{2n}\,\, ,
\end{equation}
for $k=1,2, \cdots$, and  the vector fields $X_0,X_1, \cdots, X_{2K}$ satisfy  H{\"o}rmander's condition if and only if there is a $r\leq 2n-1$ such that $V_r=\R^{2n}$. 
\end{lem}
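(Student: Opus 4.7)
The approach is to reduce the iterated commutator computation to linear algebra on $\R^{2n}$. First I would observe that for any constant vector field $Y=v\cdot\nabla$ with $v\in\R^{2n}$, a short direct calculation using $X_0=-(Ax)\cdot\nabla$ yields $[Y,X_0]=-(Av)\cdot\nabla$. Since the bracket of a constant vector field with a linear one is again constant, every iterated commutator appearing in the construction of $V_k$ is a constant vector field, and each $V_k$ can therefore be identified with a subspace of $\R^{2n}$. Under this identification, the recursion \eqref{eq:def-Vk} becomes $V_k=V_{k-1}+AV_{k-1}$.

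The crucial observation is that $N\Omega$ maps $\R^{2n}$ into $V_0$: from \eqref{eq:def-N} each column of $N$ lies in $\spann(\Re l_k,\Im l_k)=V_0$, so $N\Omega v\in V_0$ for every $v$. Since $A=F+N\Omega$, this gives $Av\equiv Fv\pmod{V_0}$. I would then prove the formula $V_k=V_0+FV_0+\cdots+F^kV_0$ by induction on $k$: assuming it for $V_{k-1}$, one computes $AV_{k-1}$ term by term using $AF^jV_0\subseteq F^{j+1}V_0+V_0$ to obtain one inclusion; the reverse inclusion $F^kV_0\subseteq V_k$ follows by applying $F=A-N\Omega$ to the inductive hypothesis $F^{k-1}V_0\subseteq V_{k-1}$, noting that the $N\Omega$ correction stays in $V_0\subseteq V_k$.

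For the second claim, $(V_k)$ is a nondecreasing chain of subspaces of $\R^{2n}$. If $V_k=V_{k+1}$ for some $k$, then the explicit formula yields $FV_k=FV_0+F^2V_0+\cdots+F^{k+1}V_0\subseteq V_{k+1}=V_k$, so $V_k$ is $F$-invariant and hence $V_j=V_k$ for all $j\geq k$. The dimensions $\dim V_k$ therefore strictly increase until stabilization, so if the chain ever reaches $\R^{2n}$ it must do so by index $r\leq 2n-\dim V_0\leq 2n-1$ (if $V_0=\{0\}$ the chain is identically zero and the H{\"o}rmander condition fails).

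The only real subtlety is recognising that the ``extra'' term $N\Omega$ in the decomposition $A=F+N\Omega$ does not contaminate the iterated span, precisely because its image already lies in $V_0$; this is what allows one to replace $A$ by the Hamiltonian map $F$ in the formula for $V_k$. Once this is in hand, the explicit description of $V_k$ and the bound $r\leq 2n-1$ are routine consequences of the standard stabilization argument for increasing chains of subspaces.
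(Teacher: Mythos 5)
Your proposal is correct and follows essentially the same route as the paper: the key step in both is that constant vector fields reduce the bracket computation to the linear recursion $V_k=V_{k-1}+AV_{k-1}$, and that $\range N\subset V_0$ lets one replace $A=F+N\Omega$ by $F$ modulo $V_0$, yielding $V_k=V_0+FV_0+\cdots+F^kV_0$. The only (cosmetic) difference is the final bound: the paper invokes Cayley--Hamilton to conclude $V_k=V_{2n-1}$ for $k\geq 2n$, while you use the standard stabilization argument for an increasing chain of subspaces, which in fact gives the marginally sharper bound $r\leq 2n-\dim V_0$.
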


\begin{proof}
Since constant vector fields commute, we only have to consider $[X_0,X_j]$, and a direct calculation gives for $k=1,2, \cdots , K$
that $[X_0,X_k]\rho=(A\Re l_k)\cdot \nabla \rho$ and $[X_0,X_{K+k}]\rho=(A\Im l_k)\cdot \nabla \rho$, 
which are again constant vector fields, and hence commute among themselves and with the $X_j$, $j\geq 1$. So the only nontrivial commutators to consider are $j$-fold commutators with $X_0$, for $k=1, \cdots , K$ we find 
\begin{align}
[X_0,[X_0,[\cdots , X_k]]]&=A^j \Re l_k\cdot\nabla \,\, ,\\
[X_0,[X_0,[\cdots , Y_{K+k}]]]&=A^k \Im l_k\cdot \nabla\,\, .
\end{align}
Hence we  find $V_1=V_0+AV_0$, $V_2=V_0+AV_0+A^2V_0$, and so on. But $A=F+N\Omega$ and the image of $N$ is contained in $V_0$, 
hence $AV_0\subset V_0+FV_0$ which gives $V_1=V_0+FV_0$, repeating this argument gives $V_k=V_0+FV_0+\cdots F^kV_0$. H{\"o}rmander's condition is now equivalent to $V_k=\R^{2n}$ for 
some $k$. But by the Cayley-Hamilton Theorem $F^{2n+r}$ for $r\geq 0$ can be expressed as a polynomial in $F$ of order $2n-1$, hence $V_k=V_{2n-1}$ for all $k\geq 2n$. 
\end{proof}

Notice that $D_t(\xi)$ is monotonically increasing, i.e., if $t\geq s\geq 0$ then $D_t(\xi)\geq D_s(\xi)\geq 0$, i.e., if it is non-degenerate for $s$, then it is non-degenerate for all $t\geq s$. Therefore we will analyse the Taylor expansion of $D_t(\xi)$ around $t=0$ and check if the first non-zero term is positive. This will then imply positivity of $D_t(\xi)$ for all $t>0$. 

The following result is implicitly contained in \cite{Hor67}, but the first explicit proof seems  to be contained in \cite{Kup72}, see as well \cite{LancPol94}. 

\begin{thm} \label{thm:decoherence} 
The quadratic form $D_t(\xi)$ is nondgenerate for all $t>0$ if, and only if,  the 
H{\"o}rmander condition $V_{2n-1}=\R^{2n}$ holds.

Furthermore, we have for $\xi\notin V_0^{\perp}$ that $M(\xi)\neq 0$ and 
\begin{equation}\label{eq:decoher-0}
D_t(\xi)=M(\xi) t+O(t^2)\,\, , 
\end{equation}
and for any $k\geq 1$ with $V_{k-1}^{\perp}\neq \{0\}$ that for $\xi\in V_{k-1}^{\perp}$ 
\begin{equation}\label{eq:decoher-k}
D_t(\xi)=\frac{M((F^T)^k\xi)}{(2k+1)(k!)^2}t^{2k+1} +O(t^{2k+2})\,\, .
\end{equation}
where $M(\xi)=\sum_{m}\abs{L_m(\Omega^T\xi)}^2$. 
\end{thm}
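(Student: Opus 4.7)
The plan is to Taylor-expand $D_t(\xi)=\int_0^t M(R_s^T\xi)\,\ud s=\int_0^t\sum_m \abs{l_m\cdot R_s^T\xi}^2\,\ud s$ around $t=0$ and track the orders in $s$ that can vanish when $\xi$ lies in some $V_k^\perp$. Writing $R_s^T=\sum_{j\geq 0}(s^j/j!)(A^T)^j$ reduces everything to the geometry of the iterates $A^j l_m$, and the whole result is then a consequence of how these iterates sit inside the filtration $V_0\subset V_1\subset\cdots$.

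The central step is the identity
\begin{equation*}
A^j l_m\equiv F^j l_m\pmod{V_{j-1}}\qquad(j\geq 1),
\end{equation*}
proved by induction on $j$. The base case $j=1$ uses $A=F+N\Omega$ together with the observation that $N$ has image in $V_0$: indeed $Nv=\sum_k(\Im l_k\cdot v)\Re l_k-(\Re l_k\cdot v)\Im l_k\in V_0$, so $N\Omega l_m\in V_0$. For the inductive step, if $A^j l_m=F^j l_m+w$ with $w\in V_{j-1}$, then
\begin{equation*}
A^{j+1}l_m=F^{j+1}l_m+N\Omega F^j l_m+Aw,
\end{equation*}
where $N\Omega F^j l_m\in V_0\subset V_j$, and by the same computation that underlies Lemma~\ref{lem:Hoerm_quadr} (namely $A V_{j-1}\subset V_j$, because $F V_{j-1}\subset V_j$ and $N\Omega V_{j-1}\subset V_0\subset V_j$) we have $Aw\in V_j$.

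With this identity in hand, fix $\xi\in V_{k-1}^\perp$ with $k\geq 1$. Since $A^j l_m\in V_j\subset V_{k-1}$ for $j\leq k-1$, the coefficients of $s^0,\ldots,s^{k-1}$ in $(R_s l_m)\cdot\xi=\sum_j (s^j/j!)(A^j l_m)\cdot\xi$ all vanish, while the coefficient of $s^k$ reduces by the identity above to $(F^k l_m)\cdot\xi/k!=l_m\cdot(F^T)^k\xi/k!$. Squaring, summing over $m$, and integrating in $s$ from $0$ to $t$ yields
\begin{equation*}
D_t(\xi)=\frac{t^{2k+1}}{(2k+1)(k!)^2}M((F^T)^k\xi)+O(t^{2k+2}),
\end{equation*}
which is \eqref{eq:decoher-k}. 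The case \eqref{eq:decoher-0} for $\xi\notin V_0^\perp$ is the same calculation with $k=0$, and the claim $M(\xi)\neq 0$ there uses the general equivalence $M(\eta)=0\iff \eta\in V_0^\perp$, itself immediate from $M(\eta)=\sum_m\abs{\Re l_m\cdot\eta}^2+\abs{\Im l_m\cdot\eta}^2$.

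For the nondegeneracy dichotomy I would argue as follows. If the H\"ormander condition holds, then $V_{2n-1}^\perp=\{0\}$ and so any $\xi\neq 0$ lies in a unique stratum $V_{k-1}^\perp\setminus V_k^\perp$ with $0\leq k\leq 2n-1$ (using $V_{-1}^\perp:=\R^{2n}$); for this $k$ we have $(F^T)^k\xi\notin V_0^\perp$, hence $M((F^T)^k\xi)>0$, and the expansion above gives $D_t(\xi)>0$ for small $t$; positivity extends to all $t>0$ by monotonicity of $t\mapsto D_t(\xi)$. Conversely, if the H\"ormander condition fails then Lemma~\ref{lem:Hoerm_quadr} provides $0\neq\xi\in\bigcap_{j\geq 0}V_j^\perp$, and since $A^j l_m\in V_j$ for all $j,m$ the entire series $(R_s l_m)\cdot\xi$ is zero, so $M(R_s^T\xi)\equiv 0$ and $D_t(\xi)\equiv 0$. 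The one place that demands care is the inductive $\pmod{V_{j-1}}$ step, where the $N\Omega$ part of $A$ produces corrections that must be absorbed into the correct $V_j$; everything else is bookkeeping on a Gaussian integral.
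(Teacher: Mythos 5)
Your proof is correct, and while it lands on the same core idea as the paper --- Taylor-expanding $D_t(\xi)$ at $t=0$, identifying the first non-vanishing coefficient via the filtration $V_0\subset V_1\subset\cdots$, and upgrading positivity at small $t$ to all $t>0$ by monotonicity --- the technical route is genuinely different. The paper differentiates the Lyapunov equation $\pa_t D_t=AD_t+D_tA^T+M$ to get the matrix formula $\pa_t^{j+1}D_t|_{t=0}=\sum_l\binom{j}{l}A^{j-l}M(A^T)^l$ (Lemma \ref{lem:derivative-exp}), kills low orders with the row-vector statement $\xi^TA^jM=0$, $\xi^TA^kM=\xi^TF^kM$ for $\xi\in V_{k-1}^\perp$ (Lemma \ref{lem:aux-zero}), and proves the ``only if'' direction via Lemma \ref{lem:Taylor}(ii) together with analyticity of $t\mapsto D_t(\xi)$. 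You instead expand $R_s=\ue^{sA}$ inside the integral representation $D_t(\xi)=\int_0^t\sum_m\abs{l_m\cdot R_s^T\xi}^2\,\ud s$, and your key identity $A^jl_m\equiv F^jl_m \pmod{V_{j-1}}$ is essentially the transpose of Lemma \ref{lem:aux-zero}; squaring and integrating the series yields the coefficient $\frac{1}{(2k+1)(k!)^2}$ directly, without the binomial bookkeeping needed to reconcile $\binom{2k}{k}/(2k+1)!$ with the stated constant. Your converse is also more elementary: since $A^jl_m\in V_j$ for all $j$, any $\xi\in\bigcap_j V_j^\perp$ (nonzero when H\"ormander fails, by Lemma \ref{lem:Hoerm_quadr} and Cayley--Hamilton) annihilates the entire integrand, so $D_t(\xi)\equiv 0$ with no appeal to analyticity; and your stratification step ($\xi\in V_{k-1}^\perp\setminus V_k^\perp$ forces $(F^T)^k\xi\notin V_0^\perp$, hence $M((F^T)^k\xi)>0$) correctly replaces Lemma \ref{lem:Taylor}(ii) in the ``if'' direction, because $V_k=V_{k-1}+F^kV_0$. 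What the paper's organization buys is reusability --- Lemma \ref{lem:aux-zero} and the expansion of $\la\xi,R_{-s}l_k\ra$ are invoked again in the proof of Proposition \ref{prop}, which in fact uses exactly your series-in-the-integrand device --- while your version is more self-contained and makes the origin of the delay exponent $2k+1$ (square of the first surviving power $s^k$, then one integration) completely transparent.
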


The strategy of the proof is to study the Taylor expansion of $D_t(\xi)$, as a function of $t$ around $t=0$. 
We will show that for any $\xi\neq 0$ the first non-zero term in the Taylor expansion is positive, and this implies by the monotonicity 
of $D_t(\xi)$ that the quadratic form is non-degenerate and positive for all $t>0$. 

The first step is provided by the following Lemma. 

\begin{lem}\label{lem:derivative-exp} We have 
\begin{equation}
D_0=0\,\, ,
\end{equation}
and for any $j\in \N_0$ 
\begin{equation}\label{eq:D-der}
\pa_t^{j+1}D_t|_{t=0}=\sum_{l=0}^j \binom{j}{l} A^{j-l} M(A^T)^l \,\, .
\end{equation}
\end{lem}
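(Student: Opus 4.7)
The plan is to read the two claims directly off the integral representation $D_t=\int_0^t R_sMR_s^T\,\ud s$ with $R_s=\ue^{sA}$ from Theorem \ref{thm:time-evo}. The first claim, $D_0=0$, is immediate from the vanishing integral.

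For the second claim, I would differentiate once using the fundamental theorem of calculus to get
\begin{equation}
\pa_t D_t=R_tMR_t^T,
\end{equation}
and then take $j$ further derivatives by applying the Leibniz product rule to the matrix-valued product $R_t\cdot M\cdot R_t^T$ (with $M$ constant in $t$). This yields
\begin{equation}
\pa_t^{j+1}D_t=\sum_{l=0}^j\binom{j}{l}\bigl(\pa_t^{j-l}R_t\bigr)\,M\,\bigl(\pa_t^{l}R_t^T\bigr).
\end{equation}

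The remaining ingredient is the identity $\pa_t^m R_t=A^m R_t$, which follows by iterating $\pa_t R_t=AR_t$; transposing gives $\pa_t^m R_t^T=R_t^T(A^T)^m$. Evaluating at $t=0$ where $R_0=I$ produces exactly $\pa_t^{m}R_t|_{t=0}=A^m$ and $\pa_t^{m}R_t^T|_{t=0}=(A^T)^m$, and substituting these into the Leibniz expansion gives \eqref{eq:D-der}.

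There is essentially no obstacle here: the whole argument is a routine application of the Leibniz rule to a bilinear expression in $R_t$, followed by specializing at $t=0$. The only point demanding a line of care is that one must treat $R_t$ and $R_t^T$ as (non-commuting) matrix factors, so that the order $A^{j-l}\,M\,(A^T)^l$ in the final formula is preserved and not collapsed as in the scalar case.
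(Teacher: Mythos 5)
Your argument is correct, but it takes a different route from the paper. The paper deliberately avoids the integral representation: it works from the Lyapunov-type ODE $\pa_t D_t=AD_t+D_tA^T+M$ in \eqref{eq:Dt}, notes that differentiating it gives $\pa_t^{j+1}D_t=A\,\pa_t^jD_t+\pa_t^jD_t\,A^T$ for $j\geq 1$, and then proves \eqref{eq:D-der} by induction (the binomial coefficients arising via Pascal's rule), with $D_0=0$ and $\pa_tD_t|_{t=0}=M$ as the base case. You instead start from $D_t=\int_0^t R_sMR_s^T\,\ud s$, apply the fundamental theorem of calculus to get $\pa_tD_t=R_tMR_t^T$, and then expand with the Leibniz rule using $\pa_t^mR_t=A^mR_t$, which at $t=0$ gives the stated formula in one stroke, with the binomial coefficients supplied directly by Leibniz rather than by an induction. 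Both derivations are complete and of comparable length; yours is arguably more self-contained since the combinatorics is packaged into a standard identity, while the paper's recursion $\pa_t^{j+1}D=A\,\pa_t^jD+\pa_t^jD\,A^T$ is the form that is reused conceptually in the later analysis (it makes transparent how each extra $t$-derivative inserts one factor of $A$ on the left or $A^T$ on the right, which is the mechanism exploited in Lemmas \ref{lem:aux-zero} and \ref{lem:Taylor}). You are also right to flag the non-commutativity: preserving the ordering $A^{j-l}M(A^T)^l$ is exactly the point where a careless scalar-style collapse would give a wrong statement.
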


\begin{proof}
We have  derived in Theorem \ref{thm:time-evo} a formula for $D_t$ as an integral, this immediately implies that $D_0=0$. For the study of the derivatives of $D_t$ it is easier to use \eqref{eq:Dt} and its derivatives
\begin{align}
    \pa_t D&=AD+DA^T+M\,\, ,\\
    \pa_t^{j+1}D&=A\pa_t^jD+\pa_t^j DA^T \quad j=0,1,2, \cdots .
\end{align}
With $D_0=0$ these relations immediately give $\pa_tD_0=M$ and \eqref{eq:D-der} is then easily proved by induction. 
\end{proof}

The next Lemma provides a technical step to evaluate $\pa_t^jD_t(\xi)$ for $\xi\in V_{k}^{\perp}$.

\begin{lem}\label{lem:aux-zero} Let $k\geq 1$ and  $\xi\in V_{k-1}^{\perp}$, then for $j=0,1, \cdots , k-1$ 
\begin{equation}
\xi^T A^jM=0\,\, 
\end{equation}
and 
\begin{equation}
\xi^T A^kM=\xi^T F^k M\,\, 
\end{equation}
\end{lem}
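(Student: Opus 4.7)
The plan is to exploit two simple structural facts about the matrices $M$ and $N$. First, $\range(M)\subset V_0$ and $\range(N)\subset V_0$: indeed $Mv=\sum_k(\Re l_k\cdot v)\Re l_k+(\Im l_k\cdot v)\Im l_k\in V_0$, and likewise $Nv=\sum_k(\Im l_k\cdot v)\Re l_k-(\Re l_k\cdot v)\Im l_k\in V_0$. Combined with the explicit description $V_m=V_0+FV_0+\cdots+F^m V_0$ from Lemma \ref{lem:Hoerm_quadr} and with the decomposition $A=F+N\Omega$, this is enough to control the range of $A^jM$ at each step.

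For the first identity, I would prove the claim $\range(A^jM)\subset V_j$ for all $j\geq 0$ by induction. The base case $j=0$ is just $\range(M)\subset V_0$. For the step, write $A^{j+1}M=FA^jM+N\Omega A^jM$; then $\range(FA^jM)\subset FV_j\subset V_{j+1}$ by definition of $V_{j+1}$, while $\range(N\Omega A^jM)\subset \range(N)\subset V_0\subset V_{j+1}$. For $\xi\in V_{k-1}^\perp$ and $j\leq k-1$, the inclusion $\range(A^jM)\subset V_j\subset V_{k-1}$ yields $\xi^T A^jM=0$.

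For the second identity, expand $(F+N\Omega)^k$ into a sum over the $2^k$ words of length $k$ in the letters $F$ and $N\Omega$. One word is $F^k$; I will show that every other word, when multiplied by $M$ on the right, has range inside $V_{k-1}$. A generic non-$F^k$ word can be written $F^{a_1}(N\Omega)F^{a_2}(N\Omega)\cdots(N\Omega)F^{a_{s+1}}$ with $s\geq 1$ and $a_1+\cdots+a_{s+1}+s=k$. Tracing the range from right to left: $\range(F^{a_{s+1}}M)\subset V_{a_{s+1}}$, then each application of $N\Omega$ collapses whatever it receives back into $V_0$, so after the leftmost $N\Omega$ we sit in $V_0$; a final multiplication by $F^{a_1}$ places the result in $F^{a_1}V_0\subset V_{a_1}$. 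Since $s\geq 1$ forces $a_1\leq k-1$, this range is contained in $V_{k-1}$, and hence $\xi^T(A^k-F^k)M=0$ for $\xi\in V_{k-1}^\perp$.

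The only real obstacle is the combinatorial bookkeeping in the expansion of $A^k$, since $F$ and $N\Omega$ do not commute. The key observation that tames this is precisely that every occurrence of $N\Omega$ destroys the previous data by projecting back into $V_0$, so only the leading block $F^{a_1}$ governs the final range; requiring at least one $N\Omega$ letter in the word then automatically forces $a_1\leq k-1$, which is exactly the bound needed.
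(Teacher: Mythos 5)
Your proof is correct and follows essentially the same route as the paper: both rest on $\range M,\range N\subset V_0$, the explicit description $V_m=V_0+FV_0+\cdots+F^mV_0$, and the expansion of $A^k=(F+N\Omega)^k$ in which every word containing an $N\Omega$ is annihilated by $\xi\in V_{k-1}^{\perp}$ because its leading $F$-block has length at most $k-1$. The only cosmetic difference is that you track ranges word by word (and do the induction for $\range(A^jM)\subset V_j$ explicitly), whereas the paper phrases the same fact as $\xi^TF^jN=0$ for the leftmost block $F^jN$.
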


\begin{proof}
Notice that 
\begin{equation} 
V_{k-1}^{\perp}=V_0^{\perp}\cap (FV_0)^{\perp}\cap \cdots \cap (F^{k-1}V_0)^{\perp}\,\, , 
\end{equation}
and hence $V_j^{\perp} \subset V_{k-1}^{\perp}$ for $j\leq k-1$. As $\range M=V_0$, we find, as in the proof of Lemma \ref{lem:Hoerm_quadr}, that 
$\range A^j M\subset V_j$ and therefore $\xi\in V_{k-1}^{\perp}$ implies $\xi^T A^j M=0$.

To prove the second relation we notice that  by \eqref{eq:def-N} we have $\ker N=V_0^{\perp}$ and therefore $N\xi=0$. That implies $A^T\xi=F^T\xi+\Omega N\xi=F\xi$ 

Using that $A^k=(F+N\Omega)^k$ and that  $\range (F^j N)\subset V_{k-1}$ for $j\leq k-1$, which implies $\xi^TF^j N=0$, we find 
$\xi^TA^k=\xi^T F^k$, as all other terms obtained by expanding $(F+N\Omega)^k$ are of the form $F^j N B$ for some $j\leq k-1$, where $B$ is product of terms involving $F$ and $\Omega N$. 
\end{proof}

Now we combine the previous two Lemmas.

\begin{lem}\label{lem:Taylor}
\begin{itemize}
\item[(i)] Suppose $\xi\in V_{k-1}^{\perp}$, then for all $j\leq 2k$ 
\begin{equation}
\pa_t^j D_t(\xi)|_{t=0}=0
\end{equation}
and 
\begin{equation}\label{eq:odd-der}
\pa_t^{2k+1}D_t(\xi)|_{t=0}=\binom{2k}{k} \xi^T F^kM(F^T)^k\xi\,\, .
\end{equation}
\item[(ii)] Suppose for all $j\leq 2k+1$ we have 
\begin{equation}
\pa_t^j D_t(\xi)|_{t=0}=0\,\, ,
\end{equation}
then $\xi\in V_k^{\perp}$. 
\end{itemize}
\end{lem}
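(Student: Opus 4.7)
The plan is to prove both parts by combining Lemma \ref{lem:derivative-exp}, which supplies the explicit formula
\[
\partial_t^{j+1} D_t|_{t=0} = \sum_{l=0}^{j} \binom{j}{l} A^{j-l} M (A^T)^l,
\]
with Lemma \ref{lem:aux-zero}, which controls the action of the powers $A^j$ on $V_{k-1}^\perp$. Part (i) will be a direct substitution together with an indexing argument, and part (ii) will follow from part (i) by induction on $k$.

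For part (i), I will fix $\xi \in V_{k-1}^\perp$ and expand $\xi^T \partial_t^{j+1} D_t|_{t=0}\, \xi = \sum_{l=0}^j \binom{j}{l} \xi^T A^{j-l} M (A^T)^l \xi$. Lemma \ref{lem:aux-zero} gives $\xi^T A^i M = 0$ for $i \leq k-1$, and transposing (using $M = M^T$) gives $M (A^T)^i \xi = 0$ for the same range of $i$. Hence a summand can only survive when both $j-l \geq k$ and $l \geq k$. This forces every summand to vanish when $j+1 \leq 2k$, and for $j+1 = 2k+1$ only the central term $l = k$ survives. I then simplify $\xi^T A^k M (A^T)^k \xi$ using both $\xi^T A^k M = \xi^T F^k M$ and its transpose $M(A^T)^k\xi = M(F^T)^k\xi$ to obtain the claimed expression $\binom{2k}{k}\, \xi^T F^k M (F^T)^k \xi$.

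For part (ii), I will proceed by induction on $k$. The base case $k = 0$ reduces to $\partial_t D_t(\xi)|_{t=0} = \xi^T M \xi = 0$, which forces $M\xi = 0$ since $M \geq 0$; this places $\xi$ in $\ker M = V_0^\perp$. For the inductive step, the hypothesis $\partial_t^j D_t(\xi)|_{t=0} = 0$ for $j \leq 2k+1$ contains the analogous vanishing up to order $2(k-1)+1$, so by the inductive hypothesis $\xi \in V_{k-1}^\perp$. I then invoke part (i) at level $k$ to rewrite
\[
0 = \partial_t^{2k+1} D_t(\xi)|_{t=0} = \binom{2k}{k}\bigl\| M^{1/2} (F^T)^k \xi \bigr\|^2,
\]
which yields $M(F^T)^k \xi = 0$, equivalently $\xi \perp F^k V_0$. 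Combined with $\xi \in V_{k-1}^\perp$, this gives $\xi \in V_k^\perp = V_{k-1}^\perp \cap (F^k V_0)^\perp$.

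I do not expect a serious obstacle: the only point needing care is the bookkeeping in the binomial sum of part (i), making sure that both $\xi^T A^j M = 0$ and its transpose $M(A^T)^j \xi = 0$ are used symmetrically, and that the identification $\ker M = V_0^\perp$ (which holds because $M$ is positive semi-definite with range equal to $\spann\{\Re l_k, \Im l_k\} = V_0$) is invoked when extracting the kernel condition from the quadratic form in part (ii).
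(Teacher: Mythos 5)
Your proposal is correct and follows essentially the same route as the paper: part (i) by substituting the formula of Lemma \ref{lem:derivative-exp} and killing all but the central binomial term via Lemma \ref{lem:aux-zero} (and its transpose, using $M=M^T$), and part (ii) by induction on $k$, extracting $\xi\perp F^kV_0$ from the vanishing of the order-$(2k+1)$ derivative; your use of $M\geq 0$ and $\ker M=V_0^{\perp}$ is just a repackaging of the paper's explicit sum-of-squares expression for $\xi^T F^kM(F^T)^k\xi$.
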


\begin{proof} 
Part $(i)$ follows directly by combining Lemma \ref{lem:derivative-exp} and Lemma \ref{lem:aux-zero}.

Part $(ii)$ we prove by induction over $k$. Suppose $k=0$, then we have to consider 
\begin{equation}
D_0(\xi)=0 \quad\text{and}\quad \pa_t D_0(\xi)=\xi^TM\xi\,\, , 
\end{equation}
which follow from Lemma \ref{lem:derivative-exp}. And as $\xi^TM\xi=\sum_j (\xi^T \Re l_j)^2+(\xi^T\Im l_j)^2$ we see that $D_0(\xi)=0$ implies $\xi\in V_0^{\perp}$.

Now suppose the assertion holds for $k-1$, then we have to show that if $\xi\in V_{k-1}^{\perp}$, then the further  conditions $\pa_t^{2k}D_0(\xi)=\pa_t^{2k+1}D_0(\xi)=0$ 
imply that $\xi\in (F^k V_0)^{\perp}$, and hence $\xi\in V_k^{\perp}=V_{k-1}^{\perp}\cap (F^k V_0)^{\perp}$. But from Lemma \ref{lem:derivative-exp} and Lemma \ref{lem:aux-zero} we get 
for $\xi\in V_{k-1}^{\perp}$ that $\pa_t^{2k}D_0(\xi)=0$ and from part $(i)$ we have for $\xi\in V_{k-1}^{\perp}$ that 
\begin{equation}
\begin{split}
\pa_t^{2k+1}D_0(\xi)&=\binom{2k}{k} \xi^T F^kM(F^T)^k\xi\\
&=\binom{2k}{k}\sum_j (\xi^T F^k\Re l_j)^2+(\xi^TF^k\Im l_j)^2 \,\, .
\end{split}
\end{equation}
Therefore $\pa_t^{2k+1}D_0(\xi)=0$ implies that $\xi\in (F^k V_0)^{\perp}$, and hence $\xi\in  V_k^{\perp}$. 
\end{proof}

Lemma \ref{lem:Taylor} is the main ingredient in the proof of Theorem \ref{thm:decoherence} which we can now provide. 

\begin{proof}[Proof of Theorem \ref{thm:decoherence}]
Suppose H{\"o}rmander's condition holds, and assume there is a $\xi\in \R^{2n}$ such that $D_{t_0}(\xi)=0$ for some $t_0>0$, and hence by monotonicity 
$D_t(\xi)=0$ for all $t\in [0,t_0]$. This implies $\pa_t^j D_0(\xi)=0$ for all $j\in \N$, and hence $\xi\in V_k^{\perp}$ for all $k\in \N$. But since H{\"o}rmander's condition holds 
$V_{2n-1}^{\perp}=\{0\}$, and so $\xi=0$. Therefore $D_t(\xi)$ is non-degenerate, and as $D_t(\xi)\geq 0$, it follows that $D_t(\xi)> 0$ for $\xi\neq 0$ and $t>0$. 

To prove  that the  non-degeneracy of $D_t(\xi)$ implies that H{\"o}rmander's condition holds, we show that for all $t>0$ 
\begin{equation}
V_{2n-1}^{\perp}\subset \ker D_t(\xi)\,\, .
\end{equation}
Since by Lemma \ref{lem:Hoerm_quadr} the H{\"o}rmander condition is equivalent to $V_{2n-1}^{\perp}=\{0\}$ this implies that if the  the H{\"o}rmander condition does not hold, then 
$D_t(\xi)$ cannot be non-degenerate for $t\geq 0$.  We recall, as we pointed out in the proof of Lemma \ref{lem:Hoerm_quadr}, that by the Cayley Hamilton Theorem we have 
$V_k=V_{2n-1}$ for all $k\geq 2n$, end hence if $\xi\in V_{2n-1}^{\perp}$, then $\xi\in V_k^{\perp}$ for all $k\in \N_0$. Part $(i)$ of Lemma \ref{lem:Taylor} then gives 
\begin{equation}
\pa_t^k D_0(\xi)=0\,\, ,\quad\text{for all}\quad k\in N_0\,\, ,
\end{equation}
and as $D_t(\xi)$ is by construction an analytic function of $t$ this implies that $D_t(\xi)=0$ for all $t\geq 0$ and $\xi\in V_{2n-1}^{\perp}$. 

The formula for the leading order term in the Taylor expansion of $D_t(\xi)$ follows then directly from part $(i)$ of Lemma \ref{lem:Taylor}, if $\xi\in V_{k-1}^{\perp}$ then
\begin{equation}
    D_t(\xi)=\frac{t^{2k+1}}{(2k+1)!}\pa_t^{2k+1}D_t(\xi)|_{t=0}+O(t^{2k+2})
\end{equation}
and now we can insert \eqref{eq:odd-der} to obtain \eqref{eq:decoher-0} and \eqref{eq:decoher-k}. 

\end{proof}

Let us recall the expression for $D_t(\xi)=\xi\cdot D_t\xi$ 
\begin{equation}
D_t(\xi)=\sum_{k=1}^K \int_0^t \abs{\xi\cdot R_sl_k}^2\,  \, \ud s\,\, ,
\end{equation}
and similarly we have for $C_t(\xi)=D_t(R_-t^T\xi)$ 
\begin{equation}\label{eq:C-t-int}
C_t(\xi)=\sum_{k=1}^K\int_0^t  \abs{\xi\cdot R_{-s}l_k}^2\,  \, \ud s\,\, , 
\end{equation}
which follows by substituting $s\to t-s$. But if we substitute $s\to -s$ in \eqref{eq:C-t-int}  we find that  
\begin{equation}\label{eq:C-D}
    C_t(\xi)=-D_{-t}(\xi)\,\, .
\end{equation}
As the leading order terms of $D_t(\xi)$ for small $t$ contain only odd powers of $t$
this implies that they agree with the leading order terms of $C_t(\xi)$, i.e., we have 
\begin{equation}
C_t(\xi)=D_t(\xi)(1+O(t)) \,\, .   
\end{equation}

In the estimates of the rate of decoherence, Lemma \ref{lem:coh_state-decoherence}, contains the quadratic form 
\begin{equation}
 \tilde{C}_t(\xi)=\xi\cdot C_t(I+2C_t)^{-1}\xi\,\, , 
\end{equation}
and the methods from the proof of Theorem \ref{thm:decoherence} can be used as well to find the leading order behaviour for small $t$ of 
$\tilde C_t(\xi)$. 

\begin{prop} \label{prop} We have 
\begin{equation}
\tilde C_t(\xi)=C_t(\xi)(1+O(t))
\end{equation}
and in particular if $\xi\in V_{j-1}^{\perp}$ then
\begin{equation}\label{eq:Dtj}
\tilde C_t(\xi)=\frac{1}{(2j+1) (j!)^2} \sum_{k=1}^K \abs{\xi \cdot F^jl_k}^2\,\, t^{2j+1}+O(t^{2j+2}) \,\, . 
\end{equation}
\end{prop}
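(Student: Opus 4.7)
The plan is to reduce everything to the short-time analysis of $D_t(\xi)$ already carried out in Theorem~\ref{thm:decoherence}. The key observation is that $\tilde C_t(\xi)$ and $C_t(\xi)$ differ only by a factor of $1+O(t)$, which follows from a Neumann expansion of $(I+2C_t)^{-1}$ together with the operator-norm bound $\|C_t\| = O(t)$. The latter is immediate from the integral representation~\eqref{eq:C-t-int}, since the integrand $R_{-s}MR_{-s}^T$ is continuous in $s$ and hence uniformly bounded on any fixed small interval $[0,t_0]$.

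For the first claim, I would pick $t_0$ small enough that $2\|C_t\| < 1$ for $t\in [0,t_0]$, so that the Neumann series
\begin{equation}
\tilde C_t = C_t(I+2C_t)^{-1} = \sum_{n=0}^{\infty} (-2)^n C_t^{n+1}
\end{equation}
converges in operator norm. Since $C_t$ is symmetric and positive semi-definite, its spectral decomposition gives the elementary bound $\xi^T C_t^{n+1}\xi \leq \|C_t\|^n\, C_t(\xi)$ for every $n\geq 0$ and every $\xi$. Summing the geometric tail then yields
\begin{equation}
\bigl|\tilde C_t(\xi) - C_t(\xi)\bigr| \leq \frac{2\|C_t\|}{1-2\|C_t\|}\, C_t(\xi) = O(t)\, C_t(\xi),
\end{equation}
which is the first assertion.

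For the precise leading-order behaviour on $V_{j-1}^{\perp}$, I would then chain this with the relation $C_t(\xi) = D_t(\xi)(1+O(t))$ already recorded immediately before the proposition (which itself follows from $C_t(\xi) = -D_{-t}(\xi)$ together with the parity observation that the Taylor expansion of $D_t(\xi)$ on $V_{j-1}^{\perp}$ contains only odd powers through order $2j+1$). Substituting the leading-order term from~\eqref{eq:decoher-k} of Theorem~\ref{thm:decoherence} and rewriting $M((F^T)^j\xi) = \sum_k |\xi\cdot F^j l_k|^2$ absorbs the compounded $(1+O(t))$ factors into the $O(t^{2j+2})$ remainder, yielding~\eqref{eq:Dtj}.

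The main obstacle is mild, essentially a bookkeeping matter: tracking that the three separate sources of correction -- the Neumann tail, the bridge from $C_t(\xi)$ to $D_t(\xi)$, and the Taylor remainder of $D_t(\xi)$ -- each contribute at least one extra power of $t$ beyond the leading $t^{2j+1}$, and hence are all harmlessly absorbed into $O(t^{2j+2})$. Because $C_t$ is positive semi-definite and the leading coefficient in~\eqref{eq:decoher-k} is nonnegative, there is no cancellation to worry about, and the estimates hold uniformly on any fixed finite time interval on which the Neumann series converges.
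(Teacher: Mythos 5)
Your proof is correct and follows the same overall route as the paper: expand $\tilde C_t=C_t(I+2C_t)^{-1}$ in a Neumann series, identify the leading term with $C_t(\xi)$, and feed in the short-time expansion of $D_t(\xi)$ from Theorem \ref{thm:decoherence} via $C_t(\xi)=-D_{-t}(\xi)$ and the odd-parity observation. The one genuine difference is how the Neumann tail is controlled: you use symmetry and positive semi-definiteness of $C_t$ to get the spectral bound $\xi^T C_t^{n+1}\xi\le \norm{C_t}^n\,C_t(\xi)$, so the entire tail is $O(t)\,C_t(\xi)=O(t^{2j+2})$ for $\xi\in V_{j-1}^{\perp}$, whereas the paper re-invokes Lemma \ref{lem:aux-zero} through the integral representation \eqref{eq:C-t-int} to obtain the vector estimate $C_t\xi=O(t^{j+1})$ and hence $\xi\cdot C_t^{n+1}\xi=O(t^{2j+1+n})$ term by term. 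Your variant is a bit more elementary, proves the first assertion with a genuinely relative error uniformly in $\xi$ (which the paper's proof only treats implicitly), and avoids any further use of the commutator structure; the paper's variant gives finer term-by-term order information in the series. As a minor point in your favour, you retain the correct factor $(-2)^n$ in the Neumann expansion, where the paper's displayed series reads $(-1)^n$ -- a harmless typo there, since only the powers of $t$ matter for the estimate.
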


\begin{proof}
As $C_t=O(t)$, we can expand $\tilde C_t$ for small $t$ as
\begin{equation}
\tilde C_t=\sum_{n=0}^{\infty}  (-1)^{n}C_t^{n+1}\,\, .
\end{equation}
Now \eqref{eq:decoher-k} applied to the $n=0$ term in the sum  gives the first term in \eqref{eq:Dtj} and it remains to show that the terms 
$C_t^{n+1}$, are of higher order in $t$ for $n=1,2, \cdots $. So let us assume $\xi\in V_{j-1}^{\perp}$, then we have by \eqref{eq:C-t-int} that for any $\eta\in \R^{2n}$  
\begin{equation}
\eta \cdot C_t\xi=\sum_{k}\int_0^t\la \eta , R_{-s}l_k\ra\la \xi , R_{-s}l_k\ra\, \ud s=O(t
\end{equation}
Now we use that by Lemma \ref{lem:aux-zero} 
we have 
\begin{equation}
\la \xi , R_{-s}l_k\ra=\sum_{k=0}^{\infty} \frac{(-s)^k}{k!}\la \xi , A^kl_k\ra=O(s^{j})
\end{equation}
and with $\la \eta , R_{-s}l_k\ra=O(1)$ this gives $\eta \cdot C_t \xi=O(t^{j+1})$. 
As $\eta$ was arbitrary this means that $C_t\xi=O(t^{j+1})$ and together with $C_t=O(t)$ we obtain   
\begin{equation}
\xi  \cdot C_t^{n+1}\xi=(C_t\xi)\cdot C_t^{n-1}C_t\xi=O(t^{2j+1+n})\,\, .
\end{equation}
\end{proof}

If we combine Lemma \ref{lem:coh_state-decoherence} and Proposition \ref{prop} we immediately get Theorem \ref{thm:1} since 
\begin{equation}
   \frac{\abs{\det R_{t}}}{\sqrt{\det (I+2C_t)}}=1+O(t)\,\, , 
\end{equation}
and 
\begin{equation}
    |\Omega \delta z\cdot F^jl_k|^2=\abs{L_k(F^j\delta z}^2\,\, ,
\end{equation}
where we used in addition that $\Omega^T(F^T)^j\Omega =(-1)^jF^j$ which follows from $F=\Omega Q$ and $\Omega^T=\Omega^{-1}=-\Omega$. 

\section{Examples}\label{sec:examples}

In this section we will look at some examples to illustrate our results. 
It will be useful to rewrite the nested sequence of subspaces $V_0\subset V_1\subset \cdots \subset V_k\subset \cdots$ introduced in Definition \ref{def:Hormander}. Let us set $W_0:=V_0$ and let $W_k$ be the orthogonal complement of $V_{k-1}$ in $V_k$, so that $V_{k}=V_{k-1}\oplus W_k$,
 then we have 
\begin{equation}
V_k=W_0\oplus W_1\oplus \cdots \oplus W_k\,\, \quad \text{and for $k\neq j$}\quad W_{k}\perp W_{j}  \,\, .
\end{equation}
If the H{\"o}rmander condition does not hold, then there is a smallest $r$ such that $V_{r+1}=V_r$ and $V_r\neq \R^{2n}$, and we define $W_{DF}$ to be the orthogonal complement of $V_r$ in $\R^{2n}$, so that  
\begin{equation}\label{eq:deco-decomp}
\R^{2n}=W_0\oplus W_1\oplus\cdots \oplus W_r\oplus W_{DF}\,\, .
\end{equation}
Here $W_{DF}$ is the decoherence free subspace, and we can formally include the case that the H{\"o}rmander condition holds as we have $W_{DF}=\{0\}$ then.

Let us first look a free particle with collisional decoherence, i.e., $H=\frac{1}{2m}\hat p^2$ and $L=\sqrt{\Lambda}\, \hat q$. Then we find $N=0$ and $M=\Lambda\, e_pe_p^T$, where $e_p=(1,0)$ is the unit vector in $p$ direction, and $R_t=\begin{pmatrix} 1 & 0\\ t/m & 1\end{pmatrix}$ and so using \eqref{eq:C-t-int} we find
\begin{equation}
    D_t(\xi)=\Lambda t\xi_p^2+\frac{\Lambda}{m} t^2\xi_p\xi_q+\frac{\Lambda}{3m^2} t^3\, \xi_q^2\,\, .
\end{equation}
We see that for small $t$ $D_t$ grows linear in $t$ if $\xi_p\neq 0$, but if $\xi_p=0$ and $\xi_q\neq 0$ then it grows with $t^3$, which means it is smaller for small $t$ and hence the onset of decoherence is slower. This behavior has been observed in \cite{PabHabZur93,Zur03} when comparing the suppression of oscillatory terms in the Wignerfunction of a cat-state, as we discussed already at the end of Section \ref{sec:Introduction}.

Now consider a Hamiltonian with potential, $H=\frac{1}{2m}\hat p^2+V(q)$, and $L=\sqrt{\Lambda}\, \hat q$, then we find $[X_0,X_1]=\sqrt{\Lambda}/m\, \pa_q$, and so in this case the H{\"o}rmander condition still holds, independently of the potential. The matrix $R_t$ will of course depend on the potential, but with $Fe_p=(1/m)e_q$  by Theorem \ref{thm:decoherence} the leading order behavior of $D_t(\xi)$, and therefore the onset of decoherence,  does not depend on the potential $V$.

Let us now look at how the damped harmonic oscillator in one-degree of freedom fits into our scheme, see \cite{BrePet02}. In that situation we have $L_1=\sqrt{\gamma (\bar n+1)}\, a$ and $L_2=\sqrt{\gamma \bar n}\, a^{\dagger}$ where $a=\frac{1}{\sqrt{2}}(\hat q+\ui \hat p)$ is the annihilation operator and $a^{\dagger}$ the corresponding creation operator. Here $\gamma>0$ is a  dissipation constant and $\bar n=(\ue^{\hbar\omega \beta}-1)^{-1}$ is related to the temperature of the bath and $\omega$ is the frequency of the oscillator $H=\hbar \omega (a^{\dagger}a+1/2)$. Then we find $V_0=\R^2$, so  the H{\"o}rmander condition is fulfilled without the need for any commutators, as expected, and we find 
\begin{equation}
M=\frac{\gamma}{2}\coth(\hbar\omega\beta/2)I_2\,\, , \quad 
A=\omega \Omega_2-\frac{\gamma}{2} I_2
\end{equation}
and this leads to 
\begin{equation}
    R_t=\ue^{-\frac{\gamma}{2}\, t}\begin{pmatrix} \cos \omega t) & -\sin(\omega t)\\ \sin(\omega t) & \cos(\omega t)\end{pmatrix} 
\end{equation}
and 
\begin{equation}
    D_t(\xi)=\frac{1}{2}(1-\ue^{-\gamma t})\coth(\hbar\omega\beta/2)\abs{\xi}^2\,\, .
\end{equation}
With $(1-\ue^{-\gamma t})=\gamma t+O(t^2)$ this matches the prediction from Theorem \ref{thm:decoherence}. 

We already mentioned the case $H=\lambda \hat p\hat q$ and $L=\sqrt{\Lambda}\, \hat q$ where the H{\"o}rmander condition is not fulfilled. For this case we find 
\begin{equation}
    D_t(\xi)=\frac{\Lambda}{2\lambda} (1-\ue^{-2\lambda t})\,  \xi_p^2
\end{equation}
and $R_t=\begin{pmatrix} \ue^{-\lambda t} & 0 \\ 0 & \ue^{\lambda t}\end{pmatrix}$, so $D_t(\xi)$ is degenerate for all $t\geq 0$ and hence decoherence does not hold if $\delta q=0$.

We now look at a class of coupled harmonic oscillators with Hamiltonian 
\begin{equation}
\hat H= \sum_i^n\omega _i (a_i^{\dagger}a_i+1/2)+ \sum_{ i\neq j}\delta_{i,j} (a_i^{\dagger}a_j+a_{j}^{\dagger}a_i)
\end{equation}
where $a_i=\frac{1}{\sqrt 2} (\hat q_i+\ui \hat p_i)$, $a_i^*=\frac{1}{\sqrt 2} (\hat q_i-\ui \hat p_i)$ denote the creation and annihilation operates of the $i$'th oscillator, respectively, and the symmetric matrix $\Delta=(\delta_{ij})$, with $\delta_{ii}=0$ and $\delta_{ij}\geq 0$,  defines the coupling of the oscillators. We will write the phase space variables as 
\begin{equation}
   (p_1,q_1,p_2,q_2, \cdots , p_n,q_n)\subset \R^2_1\oplus\R_2^2\oplus \cdots \oplus \R_n^2
\end{equation}
and with this splitting we can write the symplectic form as $\Omega=I_n\otimes \Omega_2$, where $I_n$ is the $n\times n$ identity matrix and $\Omega_2$ is the symplectic form on $\R^2$. Similarly we have $Q=Q_n\otimes I_2$ where 
\begin{equation}
    Q_n=\begin{pmatrix} \omega_1  &  0 & \cdots & 0\\ 0  & \omega_2  & \cdots & 0\\ \vdots & \vdots & \ddots  & \vdots \\ 0 & 0 & \cdots & \omega_n \end{pmatrix}+\Delta
\end{equation}
is an $n\times n$ matrix. For these examples we choose to couple one oscillator of the system, the $k$'th one, to a thermal bath by choosing 
$L_1^{(k)}=\sqrt{\gamma (\bar n_k+1)}\, a_k $ and $L_2^{(k)}=\sqrt{\gamma \bar n_k}\, a_k^{\dagger}$, where $\bar n_k=(\ue^{\hbar\omega_k \beta}-1)^{-1}$ as in the previous example. We are interested how the noise will spread through the system, and if it reaches all parts of it. 

The first case we consider is a chain of $n$ coupled oscillators with thermal noise coupled to the first oscillator, $k=1$. In this case $\Delta$ is tri-diagonal with 
\begin{equation}
    \delta_{ij}=\begin{cases} \delta & \abs{i-j}=1\\ 0 & \text{otherwise}\end{cases}
\end{equation}
and $V_0=\R^2_1$, and we find $V_k=\R^2_1\oplus \R^2_2\oplus\cdots\oplus \R_{k+1}^2$, and so $V_{n-1}$ is the whole phase space and H{\"o}rmander's condition holds. If $\xi\in \R_k^2$ then we have  
\begin{equation}
D_t(\xi)=\frac{\gamma\coth(\hbar\omega_1\beta)}{2(2k+1)(k!)^2} \,\delta^k \abs{\xi}^2\, t^{2k+1}(1+O(t))\,\, ,
\end{equation}
so the onset of decoherence is delayed.

In our next example we consider a chain of three harmonic oscillators and couple the thermal noise to the middle one. Then $V_0=\R_2^2$ and 
\begin{equation}
    V_1=V_0+FV_0=\{(x,y,x)\, ;\, x,y\in\R^2\}
\end{equation}
and $V_2=V_1$, so the H{\"o}rmander condition does not hold. In the notation from \eqref{eq:deco-decomp} we set $W_0=V_0$ and $W_1=\{(x,0,x)\,;\, x\in \R^2\}$ and 
$W_{DF}:=\{(x,0,-x)\, ;\, x\in\R^2\}$, the $V_1=W_0\oplus W_1$ and the phase space is decomposed into
\begin{equation}
    W_1\oplus W_2\oplus W_{DF}
\end{equation}
where $W_{DF}$ stands for the Decoherence free susbspace. Notice that 
$W_1\oplus W_2$ and $W_{DF}$ are symplectic subspaces, and that means there exists a proper subsystem which is decoherence free. This subsystem consists of the oscillator one and three states which are anti-symmetric under permutation of oscillator one and three. 

Finally we look at the example of a chain of four oscillators and we couple the noise to oscillator two. In this situation we find $W_0=V_0=\R^2_2$, $W_2=\{(x,0,x,0)\, ;\, x\in\R^2\}$, $W_3=\R_4^2$ and $W_4=\{(x,0,-x,0)\, ;\, x\in\R^2\}$ so that $V_1=W_0\oplus W_1$, $V_2=W_0\oplus W_1\oplus W_2$ and $V_3=W_0\oplus W_1\oplus W_2\oplus W_3$ is the whole phase space, so the H{\"o}rmander condition holds. Notice how $W_2$ is the same space which appeared in the previous example as the decoherence free subspace $W_{DF}$. 

The difference between the two previous examples is the presence of a symmetry which meant that certain states are affected identically by the noise and therefore superposition of such states don't show decoherence relative to each other. This type of mechanism is a well known as a tool to create decoherence free subspaces and subsystems, see \cite{LidChuWha98,BraBraHaa00,LidWha03,lid14}.

\section{Summary and Outlook}\label{sec:summary}

We considered quantum systems coupled to an environment, and in particular situations where only some degrees of freedom of the system are coupled to the environment. Then the question arises if the influence of the environment will affect eventually the whole system, and how long it takes for noise to start affecting different parts of the system. We showed that for systems described by Gaussian Channels $\mathcal{V}_t$ generated by the GKLS equation, this questions can be addressed using the so called H{\"o}rmander condition, a condition on the commutators of the classical Hamiltonian vectorfields of the internal Hamiltonian and the Lindblad terms. 
Our main result is a condition on when the noise will affect eventually all of the system, and the identification of a set of time-scales for the onset of decoherence in different parts of the system
\begin{equation}
\norm{\mathcal{V}_t(|z\ra\la z'\ra)}\sim \ue^{-\frac{1}{\hbar}d_{j}(z-z')\, t^{2j+1}}\,\, ,     
\end{equation}
where $j=0,1,2,3, \ldots$ .

We then considered a couple of examples which suggested as well that the H{\"o}rmander condition can be used to identify decoherence free subsystems and subspaces as described in \cite{LidChuWha98,lid14,Yam14}. It would be worthwhile to follow this up and explore the decomposition of phase-space into subspaces and how this might translate into a   corresponding decomposition of the Quantum system into subsystems and  subspaces. 

We expect that the results can be extended to more general systems, where the Hamiltonian and the Lindblad operators don't have to be quadratic or linear functions of $\hat x$, respectively, by using the methods developed in \cite{Graefe_2018}. Equations of the form \eqref{eq:Lindblad-ps}, the phase space representation of the Lindblad equation,  with the assumption that the H{\"o}rmander condition holds have been studies extensively in the mathematics literature. One of the main techniques developed is to model these equations on nilpotent Lie groups whose Lie algebra is related to the commutators in the H{\"o}rmander condition, and this in turn is related to sub-Riemannian geometry, see \cite{Bram14,AgrBarBos20}. We expect that these tools will prove to be fruitful in the study of decoherence, too.

\begin{acknowledgments}
R.S. acknowledges the financial support provided by the
EPSRC Grant No. EP/P021123/1.
\end{acknowledgments}


\bibliography{open}

\end{document}